\numberwithin{equation}{section}
\theoremstyle{plain}
\newtheorem{theorem}{Theorem}[section]
\newtheorem{definition}[theorem]{Definition}
\newtheorem{lemma}[theorem]{Lemma}
\newtheorem{proposition}[theorem]{Proposition}
\theoremstyle{definition}
\newcommand{\Z}{\mathbb Z}
\newcommand{\N}{\mathbb N}
\begin{document} 

\title{On non-stability of one-dimensional non-periodic ground states} 
\author{Damian G\l odkowski \\  Institute of Mathematics \\ Polish Academy of Sciences \\ \'Sniadeckich 8,  00-656 Warsaw, Poland\\
d.glodkowski@uw.edu.pl
\\ \\ Jacek Mi\c{e}kisz \\ Institute of Applied Mathematics and Mechanics \\ University of Warsaw \\ Banacha 2, 02-097 Warsaw, Poland \\ miekisz@mimuw.edu.pl} 

\baselineskip=20pt
\maketitle 

\begin{abstract}
We address the problem of stability of one-dimensional non-periodic ground-state configurations with respect to finite-range perturbations of interactions 
in classical lattice-gas models. We show that a relevant property of non-periodic ground-state configurations in this context is their homogeneity. 
The so-called strict boundary condition says that the number of finite patterns of a configuration have bounded fluctuations on any finite subsets of the lattice.
We show that if the strict boundary condition is not satisfied, then in order for non-periodic ground-state configurations to be stable, 
interactions between particles should not decay faster than $1/r^{\alpha}$ with $\alpha>2$. In the Thue-Morse ground state, 
number of finite patterns may fluctuate as much as the logarithm of the lenght of a lattice subset. We show that the Thue-Morse ground state is unstable 
for any $\alpha >1$ with respect to arbitrarily small two-body interactions favoring the presence of molecules consisting of two spins up or down.
We also investigate Sturmian systems defined by irrational rotations on the circle. They satisfy the strict boundary condition but nevertheless
they are unstable for $\alpha>3$.
\end{abstract}

\section{Introduction}

Since the discovery of quasicrystals \cite{shechtman}, one of the problems in statistical mechanics is to construct microscopic models 
of interacting atoms or molecules in which all configurations minimizing energy, the so-called ground-state configurations, are non-periodic. 
Here we will discuss models in which although all ground-state configurations are non-periodic, they all look the same, they cannot be distinguished locally. 
More precisely, they support the unique translation-invariant measure called the non-periodic ground state of the system. 

There were constructed many classical lattice-gas models without periodic ground-state configurations \cite{Rad0,cmpmiekisz,Miesta,strictboundary}.
Some of these models are based on non-periodic tilings of the plane with Wang square-like tiles \cite{robinson}. 
In such tilings, centers of tiles form a regular two-dimensional lattice $\Z^{2}$,
assignments of tiles to vertices are non-periodic. Now, types of tiles are identified with types of particles, interactions between particles correspond to matching rules.
Namely, the interaction energy between two particles which match as tiles is $0$, if they don't match, the energy is positive, say $1$. 
It is easy to see that ground-state configurations correspond to tilings, therefore there are no periodic ones.  
Interactions in such models are obviously non-frustrated - all interactions attain their minima (equal to $0$) in ground-state configurations.  

A desired property of non-periodic ground-state configurations is their stability against small perturbations of interactions between particles. 
For two-dimensional systems with finite-range non-frustrated interactions and with unique non-periodic ground-state measure, 
the relevant property is the so-called strict boundary condition -the requirement that the number of finite patterns 
can fluctuate at most proportional to the boundary of the lattice subset (a precise definition is given in Section 2) \cite{strictboundary}.
It was shown in \cite{strictboundary} that the strict boundary condition is equivalent to the stability of ground-state configurations. 
More precisely, non-periodic ground states are stable against small perturbations of the range $d$ if and only if the strict boundary condition is satisfied for all local patterns 
of sizes smaller than $d$. So far we do not have examples of two-dimensional finite-range classical lattice-gas models without periodic ground-state configurations 
which satisfy the strict boundary condition. For example, it was proved that the non-periodic ground state based on Robinson's tilings is unstable 
with respect to arbitrarily small chemical potentials \cite{jaradin}.  

One of the goals of our paper is to show that the strict boundary condition is relevant also for infinite-range interactions.

Situation concerning non-periodicity is quite different in one-dimensional models. 
It is known that one-dimensional systems without periodic ground-state configurations require infinite-range interactions \cite{bundangnenciu,schulradin,thirdlaw}. 
It follows that one-dimensional non-periodic ground states cannot be stable with respect to small perturbations in any reasonable space of infinite-range interactions,
in the space $l_{1}$ of summable interactions we can cut the tail of an arbitrary small $l_{1}$-norm to obtain a finite-range Hamiltonian with some periodic
ground-state configurations.

One-dimensional two-body interactions producing only non-periodic ground-state configurations were presented in \cite{bakbruinsma,aubry2,aubry3,jedmiek1,jedmiek2}. 
Hamiltonians in these papers consisted of two-body repelling interactions between particles and a chemical potential favoring particles.
Such interactions are obviously frustrated. Ground states of these models form Cantor sets called devil's staircases. Non-periodicity is present only for certain values 
(of measure zero) of chemical potentials - an arbitrary small change of a chemical potential destroys the non-periodic ground state.

In \cite{tmhamiltonian}, a non-frustrated infinite-range, exponentially decaying four-body Hamiltonian was constructed, with the unique ground-state-measure supported 
by Thue-Morse sequences \cite{keane}. Recently in \cite{ahj} there were constructed non-frustrated two-body (augmented by some finite-range interactions) 
Hamiltonians producing exactly the same ground states as in the frustrated model of \cite{bakbruinsma,aubry2,aubry3}. 

Here we investigate the stability of non-periodic ground-state configurations with respect to finite-range perturbations.
We show that strict boundary condition plays here an important role. Our general result is that if the strict boundary condition is not satisfied 
in order for non-periodic ground-state configurations to be stable, interactions between particles should not decay faster than $1/r^{\alpha}$ with $\alpha>2$,
see Theorem 2.2. 

In the Thue-Morse ground state, the number of finite patterns may fluctuate as much as the logarithm of the lenght of a lattice subset.
We show that such a ground state is unstable with respect to arbitrarily small two-body interactions favoring 
the presence of molecules consisting of two spins up or down, see Theorem 4.1.

We also investigate Sturmian systems defined by irrational rotations on the circle, which satisfy the strict boundary condition \cite{ahj}. 
Hamiltonians having Sturm sequences as ground state-configurations we recently constructed in \cite{ahj}. We show that if $\alpha>3$, 
then Sturmian systems are not stable, see Theorem 5.6.

\section{Strict boundary condition and non-stability of non-periodic ground states}

A frequency of a finite pattern in an infinite configuration is defined as the limit of the number of occurrences of this pattern in a segment of length $L$ 
divided by $L$ as $L \rightarrow \infty$. All sequences in any given Sturmian system have the same frequency for each pattern. We are interested now whether the fluctuations 
of the numbers of occurrences are bounded (bounded by the boundary of the size of the boundary, which in one-dimensional systems is equal to $2$). 
If that is the case, configurations are said to satisfy the {\bf strict boundary condition} \cite{strictboundary} or rapid convergence of frequencies to their equilibrium values
\cite{peyriere,gambaudo}. 

\begin{definition}
Given a sequence $X=(x_n) \in  \{0,1\}^\Z$ and a finite word $w$, define the frequency of $w$ as 
\[
\xi_w=\lim_{N\to \infty}\frac{\#\{|n|\le N\mid x_n\dots x_{n+|w| - 1}=w\}}{2N}.
\]
Furthermore, for a segment $A\subset \Z$, denote by $X(A)$ the sub-word $(x_n)_{n\in A}$. 
We say that a sequence $X$ satisfies the {\bf strict boundary condition} (quick convergence of frequencies) if for any word $w$ and a segment $A \subset \Z$, 
the number of appearances of $w$ in $X(A)$, $n_{w}(X(A))$, satisfies the following inequality:
\[
|n_{w}(X(A)) - \xi_{w}|A|| < C_{w},
\] 
where $C_{w}>0$ is a constant which depends only on the word $w$.
\end{definition}

We consider classical lattice-gas models with unique ground-state measures supported by non-periodic ground-state configurations. 
The configuration space of a system is denoted by $\Omega = \{0, 1\}^{\Z}$, where $0$ means the absence of a particle at a given lattice site, 
and $1$ its presence. Let $f(r) = 1/r^{\alpha}$ be the energy of one-dimensional interaction between particles at a distance $r$ in a classical lattice-gas model. 

\begin{theorem}
If non-periodic ground-state configurations do not satisfy the strict boundary condition and the interaction energy decays as $1/r^{\alpha}$ with $\alpha>2$, 
then they are unstable with respect to an arbitrary small chemical potential - a one-body on-site interaction.
\end{theorem}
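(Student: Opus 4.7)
The plan is, given any chemical potential $\mu\neq 0$, to build a local modification of a ground-state configuration whose perturbed energy is strictly lower. The argument turns on a tension between two facts: for $\alpha>2$ the interactions across the boundary of any finite segment are uniformly bounded, which forces two ground-state configurations to have essentially the same inner energy on every finite segment; while failure of the strict boundary condition, applied to the single-site pattern $w=1$ that a chemical potential couples to, supplies segments on which the particle counts in two ground-state configurations differ by arbitrarily much.

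First I would verify that $\sum_{n\in A,\,m\notin A}|n-m|^{-\alpha}$ is bounded by a constant $C_{0}=C_{0}(\alpha)$ independent of $A\subset\Z$: this reduces to $\sum_{j\geq 1}j^{-(\alpha-1)}<\infty$, which holds precisely when $\alpha>2$. Now fix two ground-state configurations $X$ and $Y$, where $Y$ may be taken to be a translate of $X$ by translation-invariance of the unique non-periodic ground-state measure, and a finite segment $A$. Form the splices $X'=Y|_{A}\cup X|_{A^{c}}$ and $Y'=X|_{A}\cup Y|_{A^{c}}$; each is a finite modification of one of the ground-state configurations, hence $H(X')\geq H(X)$ and $H(Y')\geq H(Y)$. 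Expanding these differences into inner and cross-boundary contributions and applying the uniform boundary bound yields $|E_{\mathrm{in}}(X|_{A})-E_{\mathrm{in}}(Y|_{A})|\leq 2C_{0}$ and consequently
\[
0\leq H(X')-H(X)\leq 4C_{0},
\]
uniformly in $A$. With the perturbed Hamiltonian $H_{\mu}=H-\mu\sum_{n}x_{n}$ one then has
\[
H_{\mu}(X')-H_{\mu}(X)=\bigl[H(X')-H(X)\bigr]-\mu\bigl[n_{1}(Y|_{A})-n_{1}(X|_{A})\bigr]\leq 4C_{0}-|\mu|K,
\]
where $K:=|n_{1}(Y|_{A})-n_{1}(X|_{A})|$ and the sign of $\mu$ is matched to the sign of that count difference. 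Failure of the strict boundary condition for $w=1$, combined with the fact that $n_{1}(X|_{A+k})$ has translate-average $\xi_{1}|A|$ (so unbounded deviations occur in both signs), lets one choose $A$ and $Y$ making $K>4C_{0}/|\mu|$. The displayed quantity is then negative, so $X$ fails to minimize $H_{\mu}$ and the non-periodic ground state is destabilized.

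The main obstacle --- and the source of the threshold $\alpha>2$ --- is the uniform boundary bound: it degenerates logarithmically at $\alpha=2$ and polynomially below, so the same argument does not apply there. A secondary point is that the proof as sketched uses failure of the strict boundary condition for the single-site pattern $w=1$, which is exactly what a chemical potential couples to; if only a longer pattern $w$ fluctuates unboundedly (as for Thue-Morse with $w=11$), then the chemical potential alone need not destabilize, and one must use an interaction coupling to that word, which is the route taken in Theorem 4.1.
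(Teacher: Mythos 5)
Your proposal is correct and follows the paper's overall strategy: swap a segment of the ground-state configuration $X$ for a translated segment carrying a very different particle count, gain an unbounded amount $|\mu|K$ from the chemical potential, and beat a two-body interaction cost that is uniformly bounded in the segment precisely when $\alpha>2$. The genuine difference lies in how the two-body cost is controlled. The paper bounds only the interactions crossing the boundary of the swapped segment (its $E_1$ for distances $<L$ and $E_2$ for distances $\geq L$) and tacitly treats the internal pair energy of the inserted patch as equal to that of the removed one; your double-splice argument --- forming both $X'=Y|_A\cup X|_{A^c}$ and $Y'=X|_A\cup Y|_{A^c}$ and adding the two ground-state inequalities so that the internal energies cancel --- is exactly what justifies this, yielding $|E_{\mathrm{in}}(X|_A)-E_{\mathrm{in}}(Y|_A)|\leq 2C_0$ and hence $H(X')-H(X)\leq 4C_0$. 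So your version buys a rigorous treatment of a step the paper leaves implicit, at the mild price of invoking the ground-state property of the translate $Y$ as well as of $X$. Your two closing caveats are also accurate: the threshold $\alpha>2$ is precisely the convergence of $\sum_j j^{1-\alpha}$ in the boundary bound, and the argument uses failure of the strict boundary condition for the one-site pattern $w=1$ --- the paper's proof makes the same implicit restriction by phrasing the hypothesis directly as unbounded fluctuations of the particle number, and Theorem 4.1 is indeed where a longer fluctuating pattern forces a perturbation coupling to that pattern instead.
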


\begin{proof}
Let $X \in \Omega$ be a ground-state configuration. Let us assume that the system does not satisfy the strict boundary condition. 
It means that for any $C>0$, there are two segments of consecutive lattice sites, $S_{1}, S_{2} \subset \Z$ of the length $L$, such that  
$n(X(S_{2})) - n(X(S_{1})) > C$, where $n(X(S_{i}))$ is the number of 1's in $X$ on $S_{i}$, $i=1, 2$. 
We construct a finite excitation $Y$ of $X$, that is a configuration $Y$ which differs from $X$ only on a finite number of lattice sites. 
Namely, let $Y=X$ outside $S_{1}$ and $Y$ on $S_{1}$ is equal to $X$ on $S_{2}$.
Let us now introduce an on-site interaction - a chemical potential which favors the presence of particles - it assigns to each particle a negative energy $-\mu$
for some small $\mu >0$. We will be concerned with the relative hamiltonian $H(Y|X) = H(Y) - H(X)$. We will show that for $\alpha >2$, 
$H(Y|X)<0$ that is by a finite change of $X$ one can decrease the energy hence $X$ is not a ground-state configuration for a perturbed hamiltonian.

Obviously, the chemical potential decreases the energy by $C\mu$. Now we have to bound appropriately the possible increase 
of the energy associatet with a two-body original interactions. The increase of the energy can be divided into two parts: 
$E_{1}$ associated with interactions between particles at a distance smaller than $L$ and $E_{2}$ associated with interactions between particles 
at a distance equal or bigger than $L$. Now we have,

\begin{equation}
E_{1}\leq 2\sum_{r=1}^{L}\frac{r}{r^{\alpha}} < \int_{x=1}^{L}\frac{2}{x^{(\alpha-1)}}dx +2 = \frac{2}{(2-\alpha)}L^{2-\alpha} +2 -\frac{2}{2-\alpha},
\end{equation}

\begin{equation}
E_{2}\leq 2L\sum_{r=L}^{\infty}\frac{1}{r^{\alpha}} < 2L(\int_{L}^{\infty}\frac{1}{x^{\alpha}}dx + \frac{1}{L^{\alpha}}) = \frac{2}{\alpha-1}L^{2-\alpha}+2L^{1-\alpha}.
\end{equation}

It follows that for $\alpha>2$, $E_{1} + E_{2} < C\mu$ and therefore $H(Y|X)<0$ and therefore $X$ is not a ground-state configuration for the perturbed interaction.
\end{proof}

\section{Toeplitz period-doubling ground state}

We construct Toeplitz  \cite{toeplitz} (also known as period-doubling configurations) in the following way.
We place $-1$ (a symbol corresponding to the absence of a particle) on a sublattice $L_{1} \subset \Z$ of period $2$. 
Then we place $1$ (a symbol corresponding to a particle) on a sublattice $L_{2} \subset \Z$ of period $4$ which is disjoint from $L_{1}$.
We repeat this procedure ad infinitum and get a Toeplitz configuration $X \in \Omega =\{-1, 1\}^{\Z}$ such that $X(i)=(-1)^{j}$ if $i \in L_{j}$.
$X$ is obviously non-periodic, there are particles on every sublattice $L_{j}$  for even $j's$. 

It is easy to see that the closure of the orbit of $X$ by translations $T$ supports the unique translation-invariant measure, we denote it by $\rho_{To}$.
In this way we have constructed a uniquely ergodic dynamical system $(\Omega, \rho_{To}, T)$. 
The density of particles ($1$'s) in $X$ is equal to $1/3$. Now we will show that $X$ does not satisfy the strict boundary condition.  

Let us look at particles on sublattices $L_{j}, j \leq m$.
One can find a segment $W \subset \Z$ such that  $i \in L_{2}$ is the first site of $W$, $X(i)=1$, $i+2 \in L_{4}$ so  $X(i+2)=1$, $i+2+8 \in L_{6}$ 
so $X(i+2+8)=1$, ... , $i+ 2 + 8 + ...+ 2 \times 4^{(m/2)-1} \in L_{(m/2)-1}$ so $X(i+2 + 8 + ...+ 2 \times 4^{(m/2)-1})=1.$  
Let the length of $W$ be equal to
$2(2 + 8 + ...+ 2 \times 4^{(m/2)-1}) = \frac{4}{3}(4^{m/2}-1)$.

Hence the number of particles in $X$ on sublattices $L_{j}, j \leq m$ is equal to $n_{1}+ ...+n_{m/2}$,
where $n_{m/2}=2$ and $n_{k-1}= 4n_{k}-2, k=m/2, m/2-1, m/2-2,...,1$. 

Now we take $X(W)$ and place it on $V = T_{a}(W)$, where $T_{a}$ is a shift operator to the right by
$a = 4\sum_{k=0}^{(m/2) -2}4^{k} = \frac{4}{3} 4^{(m/2)-1}$.
We see that the number of particles in $X(V)$ decreases by $m/2$ with respect to $X(W)$, one particle for each $L_{j}, j \leq 2k, k =1, ... ,m/2$.
 
Obviously, the strict boundary condition is not satisfied. Moreover, we can apply the above procedure for any $m \leq V$
so the fluctuations of the number of particles on the $V \subset Z$ can be of the order $\log_{4} |V|$ where $|V|$ is the length of $V$. 

Let us assume that the Toeplitz measure $\rho_{To}$ is the unique ground state of some two-body interactions decaying as $1/r^{\alpha}$, 
where $r$ is the distance between particles.

\begin{theorem}
The Toeplitz ground state $\rho_{To}$ is unstable against an arbitrarily small chemical potential which favors the presence of particles. 
\end{theorem}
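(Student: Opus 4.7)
The plan is to apply the instability mechanism of Theorem~2.2 directly to the explicit pair of segments $W$ and $V$ constructed in the preceding paragraphs. Since the shift $V = T_a(W)$ satisfies $n(X(W)) - n(X(V)) = m/2$ for arbitrary $m$, taking $S_1 = V$ and $S_2 = W$ realizes the failure of the strict boundary condition with segment length $L = \tfrac{4}{3}(4^{m/2}-1)$ and fluctuation gap $C = m/2$, so that $C$ grows like $\log_{4} L$.

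I would then define a finite excitation $Y$ by taking the pattern $X(W)$ and placing it on $V$, leaving $Y = X$ everywhere else. This produces $m/2$ additional particles on $V$ compared with $X$, so the perturbing chemical potential (favoring particles) contributes $-\mu m/2$ to the relative Hamiltonian $H(Y|X)$. The change in two-body interaction energy is bounded exactly as in the proof of Theorem~2.2: affected pairs with both endpoints in $V$ contribute at most $E_1 \leq 2\sum_{r=1}^{L} r^{1-\alpha}$, while affected pairs with exactly one endpoint in $V$ contribute at most $E_2 \leq 2L\sum_{r=L}^{\infty} r^{-\alpha}$.

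For $\alpha > 2$ both $E_1$ and $E_2$ remain uniformly bounded as $L \to \infty$, whereas the chemical-potential gain $\mu m/2 \sim (\mu/2)\log_{4} L$ diverges. Choosing $m$ large enough (depending on $\mu$) therefore yields $H(Y|X) < 0$, contradicting the hypothesis that $X$ is a ground-state configuration of the perturbed Hamiltonian. The main technical point is simply to check that the two-body bounds from Theorem~2.2 transfer verbatim to this specific swap; since those estimates depend only on the segment length $L$ and the decay exponent $\alpha$, and not on the detailed content of the patterns being interchanged, this is essentially bookkeeping. The substantive content of the proof lies in the combinatorial construction of $W$ and $V$ carried out above, which exhibits explicitly that the Toeplitz system realizes the hypothesis of Theorem~2.2 with a logarithmic divergence of particle-count fluctuations.
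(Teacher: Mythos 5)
Your argument establishes instability only for $\alpha>2$, and in that range it is nothing more than Theorem~2.2 applied to the Toeplitz configuration, which the paragraphs preceding the theorem have already shown to violate the strict boundary condition. The substantive content of this theorem, however, is the range $1<\alpha\le 2$: the paper's proof is explicitly carried out ``for $1<\alpha<2$'', and Section~4 relies on it to obtain instability of the Thue--Morse ground state for every $\alpha>1$. In that range your bookkeeping breaks down. The crossing-pair bound $E_1\le 2\sum_{r=1}^{L}r^{1-\alpha}$ grows like $L^{2-\alpha}$ (like $\log L$ at $\alpha=2$), i.e.\ polynomially in the segment length, whereas the chemical-potential gain from the swap is only $\mu m/2\sim(\mu/2)\log_4 L$. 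The interaction cost therefore overwhelms the gain and you cannot conclude $H(Y|X)<0$; for $\alpha>2$ you have proved nothing beyond Theorem~2.2.

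The missing idea is that the naive transplantation of $X(W)$ onto $V$ is too violent when the interaction decays slowly; one must choose the shift so that the transplanted block agrees with its surroundings on all the dense sublattices $L_j$, $j\le n$, for a matching depth $n$ to be optimized. Then $Y$ and $X$ disagree only on sparse sublattices, every mismatched particle sits at distance at least of order $4^{n}$ from the sites whose occupation it affects, and the two-body cost is bounded by a constant times $|V|^{2-\alpha}/4^{n}$. Choosing $n$ of order $(2-\alpha)\log_4|V|$ makes this cost $O(1)$ while still leaving roughly $(\alpha-1)\log_4|V|$ excess particles, and it is this trade-off between the matching depth and the number of excess particles that forces $H(Y|X)<0$ for arbitrarily small $\mu$ when $1<\alpha<2$. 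This mechanism is the substance of the paper's proof and is absent from your proposal.
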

\begin{proof}

To prove the instability of the ground state $\rho_{To}$ we introduce a chemical potential $\mu$ favoring the presence of particles and a local perturbation $Y$ of $X$
such that $H(Y|X) <0$. We take $X(W)$, described above, and place it on $V$, a certain shift of $W$, such that sublattices $L_{j},  j \leq n$ (for some even $n$ to be chosen later)
agree in $Y'$ on $V$ and on the complement of $V$. In this way we introduced an extra $1$ on all sublattices $L_{j}, n< j < \log_{4} V$ with even $j's$.

Now we will construct an upper bound for $H(Y|X)$. Denote by $H_{4^{n}}$ the energy of interactions of the first particle on the left side of $L_{n+2}$  
with the particles on the part of the complement of $V$ left to the particle. We get 

\begin{equation}
H_{4^{n}}  <   \sum_{k=4^{n}}^{\infty}\frac{1}{k^{\alpha}} < 2\int_{4^{n}}^{\infty}\frac{1}{x^{\alpha}}dx = \frac{2}{(\alpha-1)4^{n(\alpha-1)}}.
\end{equation}

It follows that 

\begin{equation}
H(Y|X)  <  2 \sum_{k=1}^{V/4^{n}}H_{k4^{n}}  <  2\sum_{k=1}^{V/4^n} \frac{1}{(k4^{n})^{(\alpha-1)}} <  \frac{4V^{2-\alpha}}{(\alpha-1)(2-\alpha)4^{n}}
\end{equation}

for $1< \alpha <2$.

Let $n$ be a minimal even number such that

\begin{equation}
\frac{4V^{2 - \alpha}}{(\alpha -1)(2-\alpha)4^{n}} < 1
\end{equation}

so $n < (2-\alpha) \log_{4}|V| - \log_{4}(\alpha-1)(2-\alpha) + 3.$

It follows that the two-body interaction energy is bounded (independent on $|V|$) and the number of excessive $1's$ is bigger than  

\begin{equation}
\log_{4}|V|  -  (2-\alpha) \log_{4}|V| + \log_{4}(\alpha-1)(2-\alpha)  - 3 > \frac{\epsilon}{2} \log_{4}|V|
\end{equation}

for any $\alpha = 1 + \epsilon$  and a sufficiently big $\epsilon$ dependent $V$.

It shows that $H(Y|X)<0$ so $X$ is not a ground-state configuration for any arbitrarily small $\mu$.

\end{proof}

\section{Thue-Morse ground state}

We prove here that the Thue-Morse ground state is unstable with respect to arbitrarily small two-body interactions.

We begin by constructing a one-sided Thue-Morse sequence. We put $1$ at the origin and perform successively the substitution $S$: $ 1 \rightarrow 10, 0 \rightarrow 01$.
In this way we get a one-sided sequence $1001 0110 0110 1001 ...$, $\{X_ {TM }(i)\}, i \geq 0$. We define $X_{TM} \in \Omega = \{0, 1\}^{\Z}$ by setting 
$X_{TM}(i) = X_{TM} (-i-1 )$ for $i<0$. Let $T$ be the translation operator, i.e., $T:\Omega \rightarrow \Omega, (T(X))(i) = X(i-1), X \in \Omega$. 
Let $G_{TM}$ be the closure (in the product topology of the discrete topology on ${0,1}$) of the orbit of $X_{TM}$ by translations, 
i.e., $G_{TM} = \{T^{n} (X_{TM}), n \geq 0\}^{cl}.$ It can be shown \cite{keane} that $G_{TM}$ supports exactly one translation-invariant probability measure 
$\mu_{TM}$ on $\Omega$. 

Let us identify now $1$ with $+1$ and $0$ with $-1$, so particles are represented by spins up and the empty spaces by spins down.
It was shown in \cite{tmhamiltonian} that $\rho_{TM}$ is the only ground state of the following exponentially decaying four-spin interactions,

\begin{equation}
H_{TM} = \sum_{r=0}^{\infty} \sum_{p=0}^{\infty} H_{r,p},
\end{equation}

where

\begin{equation}
H_{r,p} = \sum_{i \in \Z} J(r,p) (\sigma_{i} + \sigma_{i+2^{r}})^{2} (\sigma_{i+(2p+1)2^{r}} + \sigma_{i+(2p+2)2^{r}})^{2}
\end{equation}

and $\sigma_{i}(X)= X(i) \in \{+1, -1\}$.

\begin{theorem}
The Thue-Morse ground state $\rho_{TM}$ is unstable against an arbitrarily small chemical potential 
which favors the presence of molecules consisting of two spins up or down.
\end{theorem}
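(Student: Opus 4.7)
The plan is to mimic the strategy of Theorem~3.1. For an arbitrary $\mu>0$, I will construct a finite-volume excitation $Y$ of $X_{TM}$ on a window $S_1\subset\Z$ of some length $L$ such that $Y$ contains at least $c\log L$ more molecules than $X_{TM}|_{S_1}$, while the four-body cost $H_{TM}(Y)-H_{TM}(X_{TM})$ remains bounded uniformly in $L$. Combining the two estimates will give $H(Y|X_{TM})<0$ for $L$ large, contradicting the ground-state property of $X_{TM}$ under the Hamiltonian perturbed by a molecule-favouring chemical potential.

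The first step, which I expect to be the main obstacle, is a logarithmic discrepancy lemma for molecule counts: produce two segments $S_1,S_2\subset\Z$ of the same length $L$ with
\[
\bigl(n_{11}(X_{TM}(S_2))+n_{00}(X_{TM}(S_2))\bigr) - \bigl(n_{11}(X_{TM}(S_1))+n_{00}(X_{TM}(S_1))\bigr)\;\geq\; c\log L
\]
for some $c>0$ independent of $L$. The $\Omega(\log N)$ oscillation is a classical feature of Thue-Morse partial sums and, more generally, of $\pm 1$ block-count imbalances generated by the substitution $S\colon 1\mapsto 10,\ 0\mapsto 01$. I would transfer it from letters to molecules by observing that the occurrence of a molecule at position $n$ is governed by an arithmetic condition on the binary expansion of $n$ (it is detected by the parity of consecutive digit sums used in the Thue-Morse construction). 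Following the standard Fourier/digit-sum analysis, I would then select two aligned windows of length $L = 2^N$ whose molecule counts differ by $\Omega(N) = \Omega(\log L)$.

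With such $S_1, S_2$ in hand, define $Y$ by $Y = X_{TM}$ outside $S_1$ and $Y|_{S_1} = X_{TM}|_{S_2}$. The molecule-favouring chemical potential then contributes at most $-c\mu \log L + O(\mu)$ to $H(Y|X_{TM})$, the $O(\mu)$ correction coming from the two boundary bonds of $S_1$. For the unperturbed cost, recall that $X_{TM}$ is a non-frustrated ground state of $H_{TM}$, so every term in $H_{r,p}$ vanishes on $X_{TM}$. Moreover, four-body quadruples whose four sites all lie inside $S_1$ (respectively, all lie outside $S_1$) also vanish on $Y$, because $Y|_{S_1}$ is itself a valid local Thue-Morse fragment. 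Only quadruples straddling the boundary of $S_1$ can be non-zero, and at scale $(r,p)$ there are at most $O(p\cdot 2^r)$ such quadruples per endpoint of $S_1$, each contributing at most $16\,J(r,p)$. The exponential decay of $J(r,p)$ then bounds the total cost by a constant $K < \infty$ independent of $L$.

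Putting the pieces together gives $H(Y|X_{TM}) \leq K - c\mu \log L + O(\mu)$, which is negative once $L$ is taken large enough, so $X_{TM}$ is not a ground-state configuration of the perturbed Hamiltonian. The principal technical hurdle, as flagged above, is the logarithmic discrepancy lemma: passing rigorously from the classical $\log N$ oscillation of Thue-Morse letter sums to a $\log L$ molecule-count discrepancy \emph{with the required sign} requires careful digit-sum bookkeeping, and it is here that most of the detailed combinatorial work would lie.
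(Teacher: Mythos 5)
Your first half --- extracting two equal-length windows whose molecule counts differ by $\Omega(\log L)$ and pasting one onto the other --- is the right idea and matches the paper's starting point, though the paper reaches the discrepancy more directly: it observes that $Y(i)=X(i)X(i+1)$ is exactly the period-doubling (Toeplitz) sequence of Section 3, whose particle counts were already shown there to fluctuate by $\sim\log_4|V|$. One caution on your phrasing: the partial sums of the Thue--Morse $\pm1$ sequence itself are \emph{bounded} (pairs $t(2j),t(2j+1)$ cancel), so there is no "classical $\log N$ oscillation of Thue--Morse letter sums" to transfer; the logarithmic oscillation lives only at the level of the $A/B$ molecules, i.e.\ the parity of $\nu_2(n+1)$, and establishing it is the whole content of the lemma. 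Your digit-sum route does work and is equivalent to the paper's Toeplitz argument.

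The genuine gap is in the energy estimate. You control the cost by "only straddling quadruples, times the exponential decay of $J(r,p)$". For the exponentially decaying couplings of the original Hamiltonian this closes the proof. But the theorem the paper actually proves (see the abstract and the line "Again we assume that $J(r,p)$ decays as $1/[(2p+2)2^r]^{\alpha}$") concerns power-law couplings with any $\alpha>1$. The number of $(r,p)$-quadruples straddling an endpoint of $S_1$ is of order $(2p+2)2^r$, so your bound is
\[
\sum_{r,p}(2p+2)2^r\,\bigl[(2p+2)2^r\bigr]^{-\alpha}=\sum_r 2^{r(1-\alpha)}\sum_p(2p+2)^{1-\alpha},
\]
which diverges in $p$ for every $\alpha\le 2$: your constant $K$ is infinite precisely in the regime $1<\alpha\le 2$ that the result is about. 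The paper's extra idea, which your proposal is missing, is to choose the pasted block so that $Y$ is \emph{globally} (across the boundary of the window, not merely inside and outside separately) a concatenation of the substitution blocks $S^{r}(1),S^{r}(0)$ for all $r<r^{*}$; by self-similarity every term $H_{r,p}$ with $r<r^{*}$ then vanishes identically, straddling ones included. Only scales $r\ge r^{*}$ contribute, with total cost of order $|V|^{2-\alpha}/2^{r^{*}}$, and $r^{*}$ is tuned to $|V|$ so that this cost stays bounded while the molecule surplus, reduced to roughly $\log_4|V|-r^{*}$, still grows. Without this scale-matching step your argument establishes the theorem only for exponentially decaying couplings (or for power laws with $\alpha>2$).
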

\begin{proof}

Let $X \in \{+1,-1\}^{\Z}$ be any Thue-Morse sequence and let $Y(i) = X(i) X(i + 1)$. It is easy to see that $Y$ is a Toeplitz sequence. 
Let us now consider $11$ and $00$ as $A$ and $10$ $01$ as $B$ type molecules, respectively. $A$ and $B$ molecules form a Toeplitz sequence
and therefore their numbers on the segment $V$ may fluctuate on the order of $\log_{4}|V|$. Now we introduce a chemical potential $h$ favoring $A$-type molecules.
It follows (in the same way as in the Toeplitz case) that if we take into account interactions $J(0,p)$ and such that $J(0,p)$ decays as $1/p^{\alpha}$, 
then the Toeplitz ground state is unstable with respect to any arbitrarily small $\mu$ for any $\alpha >1$. 

Now we have to take care of $J(r,p)$ for all $r\geq 1$. Again we assume that $J(r,p)$ decays as $1/[(2p+2)2^{r}]^{\alpha}$.
We will use the fact that Thue-Morse sequences are self-similar. 
Namely, we can group two successive symbols $10$ and $01$ and replace them by $1$ and $0$ respectively and in this way we get again a Thue-Morse sequence. 
One can do analogous groupings on every scale, for example $1001 \rightarrow 1$, $0110 \rightarrow 0$. We will also use the structure of our interaction.
First we notice that for any sequence of successive blocks of $10$ and $01$, no two pairs of either $11$ or $00$ are at an odd distance, 
therefore interactions in the Hamiltonian for $r=0$ and any $p$ attains the zero value. It follows from the self-similarity of Thue-Morse sequences
that for any sequence of successive blocks of $1001$ and $0110$, no two pairs of either two $1$'s or two $0$'s at a distance $2$ are at a distance $2(2p+1)$
for any $p>0$ so the Hamiltonian for $r=1$ and any $p$ attains the zero value. In general, for any sequence of successive blocks of $S^{r}(1)$ and $S^{r}(0)$,
the hamiltonian associated with any $r \geq 0$ and any $p$ attains the zero value, $H_{r,p}(Y|X)=0$.

Now, to prove the instability of the Thue-Morse ground state we mimic the procedure used in the Toeplitz case.
Namely, let $X$ be a Thue-Morse sequence. We construct $Y$, a local perturbation of $X$, in the following way. 
Let $V \subset \Z$ be a segment of $\Z$, then $Y=X$ on the complement of $V$ and we put on $V$ an appropriate block of $X$ such that 
$Y$ consists of successive blocks of $S^{r}(1)$ and $S^{r}(0)$ for any $r<r^*$ (to be chosen later) so $H_{r,p}(Y|X) = 0$ 
for any $p$ and $r<r^*$ and the number of excessive $A$-molecules is bigger than $\log_{4} |V| - r^*$.

Now fix $r \geq r^*$ and consider interactions $J(r,p), p>0$. Similar calculations as in the Toeplitz case show that for

\begin{equation}
H_{r} = \sum_{p=0}^{\infty}H_{r,p}
\end{equation}

we have

\begin{equation}
H_{r}(Y|X) < \frac{V^{2-\alpha}}{(\alpha-1)(2-\alpha)2^{r}}
\end{equation}

for $1 < \alpha < 2$ and therefore 

\begin{equation}
H(Y|X) = \sum_{r \geq r^{*}} H_{r}(Y|X) < \frac{V^{2-\alpha}}{(\alpha-1)(2-\alpha)2^{r^{*}-1}}.
\end{equation}

We choose a minimal $r^{*}$ such that $\frac{V^{2-\alpha}}{(\alpha-1)(2-\alpha)2^{r^{*}-1}} < 1$.

Hence we get that $H(Y|X) < 0$ which shows the instability of the Thue-Morse ground state.
\end{proof}

\section{Sturmian ground states}

We will consider bi-infinite sequences (words) of two symbols $\{0,1\}$, i.e. elements of $\Omega = \{0,1\}^\Z$.
We will identify the circle $C$ with $R/\Z$ and consider an irrational rotation by $\varphi$ (which is given by translation on $R/\Z$ by $\varphi \mod 1$).   

\begin{definition}
Given an irrational $\varphi\in C$ we say that $X\in\{0,1\}^\Z$ is \textbf{generated by $\varphi$} if it is of the following form: 
\begin{equation*}
    X(n) = \begin{cases}
               0               & \text{when} \ x+n\varphi \in P \\
               1                & \text{otherwise}
           \end{cases}
\end{equation*}
where $x\in C$ and $P=[0,\varphi)$.
\end{definition}

We call such $X$ a Sturmian sequence corresponding to $\varphi$. Let $T$ be the translation operator.
Let $G_{St}$ be the closure of the orbit of $X$ by translations. It can be shown that $G_{St}$ supports exactly one translation-invariant probability measure 
$\rho_{St}$ on $\Omega$. 

From now on we will consider only rotations by badly approximable numbers

\begin{definition}\label{badly-def}
We say that a number $\varphi$ is \textbf{badly approximable} if there exists $c>0$ such that 
$$\left | \varphi - \frac{p}{q} \right | > \frac{c}{q^2} $$
for all rationals $\frac{p}{q}$. 
\end{definition}

We need one more characterization of Sturmian words in terms of patterns that don't appear in given sequence.

\begin{theorem}\label{forbidden}\emph{\cite[Theorem 4.1]{ahj}}
Let $\varphi\in (\frac{1}{2},1)$ be irrational. Then there exist a natural number $m$ and a set $F\subseteq \N$ of forbidden distances such that Sturmian words generated by $\varphi$ are uniquely determined by the absence of the following patterns: $m$ consecutive 0's and two 1's separated by distance from $F$.
\end{theorem}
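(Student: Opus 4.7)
The plan is to prove both directions of the stated characterization.

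For necessity, set $F := \{d \in \N : d\varphi \bmod 1 \in [1-\varphi, \varphi]\}$. If $X$ is Sturmian with base point $x$ and $X(n) = X(n+d) = 1$, then $x + n\varphi$ and $x + (n+d)\varphi$ both lie in $[\varphi, 1) \bmod 1$, so $d\varphi \bmod 1$ lies in $[\varphi,1) - [\varphi,1) \bmod 1 = [0,1-\varphi) \cup (\varphi,1)$; hence $d \notin F$. For the bound on runs of zeros, the first-return time of rotation by $\varphi$ to the interval $[\varphi,1)$ takes only two consecutive integer values by the three-distance theorem, so one sets $m$ to one plus this maximum.

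For sufficiency we adopt an itinerary approach. Let $Y$ avoid both pattern types, set $A := \{n : Y(n)=1\}$, and for each $n\in\Z$ define the closed arc
\[
\overline{U_n} := \begin{cases} R_\varphi^{-n}([\varphi,1]), & n\in A,\\ R_\varphi^{-n}([0,\varphi]), & n\notin A, \end{cases}
\]
where $R_\varphi$ is rotation by $\varphi$ on $R/\Z$. Any $x\in\bigcap_n \overline{U_n}$ realises $Y$ as its Sturmian itinerary (up to behaviour on a measure-zero set of boundary orbits, which unique ergodicity lets us discard), so by compactness of the circle it suffices to prove every finite intersection of $\overline{U_n}$'s is nonempty. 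Condition (b) handles pairs of 1-constraints, since
\[
R_\varphi^{-n_1}([\varphi,1])\cap R_\varphi^{-n_2}([\varphi,1])\neq\emptyset \iff (n_2-n_1)\varphi \bmod 1 \in [0,1-\varphi]\cup[\varphi,1],
\]
which is exactly $n_2-n_1 \notin F$. Induction on the number of 1-constraints, using that the intersection of closed arcs on $R/\Z$ is a union of closed arcs, then shows that any finite intersection of 1-constraints is a nonempty closed arc $I$ of length at most $1-\varphi$.

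The crux is to show $I \not\subseteq R_\varphi^{-n_0}([\varphi,1))$ for every $n_0 \notin A$, since containment would force $Y(n_0)=1$ via the Sturmian coding and contradict $n_0\notin A$. Here condition (a) enters: the 1's it forces in every window of length $m$ around $n_0$ contribute constraints that pin $I$ into an arc whose position matches the gap structure of $Y$ at $n_0$ and excludes the bad containment. The main obstacle is quantitative, namely bounding the length of $I$ from below after only finitely many 1-constraints. Here the badly approximable hypothesis on $\varphi$ from Definition \ref{badly-def} is essential: it supplies a uniform lower bound of the form $|d\varphi \bmod 1 - s|\geq c/|d|$ for $s\in\{1-\varphi,\varphi\}$, which transfers to a lower bound on arc lengths within windows of size $m$ and rules out the degenerate inclusion.
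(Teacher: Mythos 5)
This statement is quoted from \cite[Theorem 4.1]{ahj}; the paper gives no proof of it, so your argument has to stand on its own. The necessity direction is essentially fine and coincides with what the paper records as Proposition \ref{forbidden-pro}: the difference-set computation yielding $F=\{d: d\varphi \bmod 1\in[1-\varphi,\varphi]\}$ is correct, and the boundedness of runs of $0$'s follows from the three-gap/return-time theorem (whether the return time takes two or three values is immaterial for the existence of $m$).

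The sufficiency direction, which is where the entire content of the theorem lies, has a genuine gap. The set $F$ encodes only \emph{pairwise} constraints, namely that the arcs $R_\varphi^{-n}([\varphi,1])$, $n\in A$, pairwise intersect. Closed arcs on a circle do not satisfy a Helly property: three arcs of length $1-\varphi<\tfrac{1}{2}$ can pairwise intersect and still have empty common intersection (e.g.\ arcs of length $0.4$ centred at $0$, $\tfrac{1}{3}$, $\tfrac{2}{3}$). So your assertion that ``induction on the number of 1-constraints \dots shows that any finite intersection of 1-constraints is a nonempty closed arc'' is exactly the statement that needs proof, and no induction step is supplied; the proof must exploit condition (a) together with the arithmetic of which gaps between consecutive $1$'s survive both constraints (only two specific integers do, which is what rigidifies the itinerary). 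Likewise your ``crux'' paragraph --- that $I\not\subseteq R_\varphi^{-n_0}([\varphi,1))$ for $n_0\notin A$ --- describes what must be shown rather than showing it. Finally, you invoke the badly approximable hypothesis of Definition \ref{badly-def}, but the theorem assumes only that $\varphi\in(\frac{1}{2},1)$ is irrational; bad approximability is an extra assumption introduced in this paper only for Theorem \ref{sturmian_unstable}, so even a completed version of your quantitative step would prove a strictly weaker statement than the one claimed.
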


To characterize Sturmian words generated by irrationals from $(0,\frac{1}{2})$ we have to change the roles of 0's and 1's. We will show that $F$ can also be described by the rotation. 

\begin{proposition}\label{forbidden-pro}
The set $F$ from Theorem \ref{forbidden} may be chosen in the following way: 
$$k\notin F \iff \exists y\in [\varphi , 1) \ y+k\varphi \in [\varphi, 1),$$
or equivalently 
$$ k\in F \iff k\varphi \in [1-\varphi, \varphi].  $$
\end{proposition}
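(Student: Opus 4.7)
The plan is to reduce both claimed equivalences to the circle-rotation picture underlying Definition 5.1, and then to dispatch the resulting geometry with a short case analysis.

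First I would unpack what it means for a distance $k$ to lie in $F$, where $F$ is naturally (and consistently with Theorem \ref{forbidden}) taken to be the set of distances that never occur between two $1$'s in any Sturmian word generated by $\varphi$. By Definition 5.1, a word $X$ generated by $\varphi$ from parameter $x$ satisfies $X(n) = 1$ iff $x + n\varphi \in [\varphi,1) \pmod 1$. Hence the existence of some Sturmian word (i.e.\ some $x$) and some $n$ with $X(n) = X(n+k) = 1$ is, setting $y = x + n\varphi \pmod 1$, precisely the condition that some $y \in [\varphi, 1)$ satisfies $y + k\varphi \in [\varphi, 1) \pmod 1$. Since as $x$ ranges over $C$ the corresponding $y$ ranges over all of $[\varphi,1)$, this yields the first displayed equivalence.

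For the second equivalence, write $\alpha = \{k\varphi\} \in [0,1)$ and $I = [\varphi, 1) \subset \mathbb{R}/\mathbb{Z}$. The condition from the first equivalence is exactly that $I$ meets its translate $I - \alpha$ on the circle. A short case analysis on the position of $\alpha$ in $[0,1)$ handles this: if $\alpha \in [0, 1 - \varphi)$ then $I - \alpha = [\varphi - \alpha, 1 - \alpha)$ does not wrap around and meets $I$ on $[\varphi, 1 - \alpha)$; if $\alpha \in [1 - \varphi, \varphi]$ then $\varphi - \alpha \ge 0$ and $1 - \alpha \le \varphi$, so $I - \alpha \subseteq [0, \varphi)$ is disjoint from $I$; and if $\alpha \in (\varphi, 1)$ then $I - \alpha$ wraps and contains the nonempty subarc $[\varphi - \alpha + 1, 1) \subseteq I$. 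Thus $I \cap (I - \alpha) = \emptyset$ iff $\alpha \in [1-\varphi, \varphi]$, giving the second equivalence.

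The bulk of the argument is the case analysis above, and the only delicate point is verifying that the two boundary cases $\alpha = 1 - \varphi$ and $\alpha = \varphi$ fall on the empty-intersection side; both follow from the half-open nature of $I = [\varphi, 1)$ and are routine to check. No deeper ingredient is required beyond Definition 5.1 itself and Theorem \ref{forbidden}: the proposition is essentially a geometric restatement of the rotation description of Sturmian words, and the mild subtlety is simply fixing the canonical choice of $F$ that Theorem \ref{forbidden} leaves open.
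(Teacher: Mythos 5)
Your proposal is correct and follows essentially the same route as the paper: both parts reduce the definition of a forbidden distance to the rotation picture of Definition 5.1 and then settle the arc condition by elementary circle geometry (the paper checks when the translated arc $I+k\varphi$ lies inside the complementary arc $[0,\varphi)$ via its endpoints, while you equivalently check when $I\cap(I-k\varphi)$ is nonempty by cases on $\{k\varphi\}$). Your explicit attention to the half-open boundary cases is a minor refinement of the same argument, not a different method.
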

\begin{proof}
We start with the proof of equivalence of the above statements. Let's see that 
$$\neg (\exists x\in [\varphi , 1) \ y+z \in [\varphi, 1)) \iff \forall x\in[\varphi, 1) \ y+z\in [0,\varphi) \iff \{y+z: y\in [\varphi,1)\} \subseteq [0,\varphi). $$
The arc $\{y+z: y\in [\varphi,1)\}$ is contained in $[0, \varphi)$ if and only if its endpoints are in $[0,\varphi]$ which means that $z \in [1-\varphi, \varphi]$. Picking $z=k\varphi$ completes the proof of this part. 

For the proof that $F$ is a good set of forbidden distances proceed as follows. Fix a Sturmian word $X$ generated by $\varphi$ with an initial point $x$. If $X(n)=X(n+k)=1$ then $x+n\varphi\in [\varphi,1)$ and $ x+(n+k)\varphi \in [\varphi,1)$, so for $y=x+n\varphi$ we have $y, y+k\varphi \in [\varphi,1)$ which proves that $k\notin F$. Conversely, assume that $k\notin F$. Let $y$ be such that $y, y+k\varphi \in [\varphi,1)$. Then $Y(0)=Y(k)$ where $Y$ is the sequence generated by $\varphi$ with initial point $y$, so $k$ is not a forbidden distance. 
\end{proof}

Given set of forbidden distances we may easily construct non-frustrated Hamiltonians for which the unique ground-state consists exactly of Sturmian words generated by $\varphi$. Simply we need to assign positive energies to all forbidden patterns and zero otherwise (for more details see \cite[Theorem 5.2]{ahj}). 

We need a general lemma concerning Sturmian sequences. 
\begin{lemma}\label{lemma-doubling}
Let $\varphi \in (0,1)$ be irrational. Then for each $m \in \N$ there is $k\geq m$ and a finite word $w\in\{0,1\}^{\{0,1,\dots, k-1\}}$ of length $k$ such that 
\begin{itemize}
    \item $w$ is a subword of a Sturmian word generated $\varphi$, 
    \item doubling of $w$, $\widetilde{w}\in \{0,1\}^{\{0,1,\dots, 2k-1\} }$ given by
    $$\widetilde{w}(i)= \widetilde{w}(i+k) =w(i) \ \text{for} \ i< k $$
    also is a subword of a Sturmian word generated by $\varphi$.
\end{itemize}
\end{lemma}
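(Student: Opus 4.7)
The plan is to exhibit, for every $m\in\N$, a square $ww$ with $|w|=k\geq m$ as a factor of any bi-infinite Sturmian word generated by $\varphi$, directly from the substitution structure associated with the continued fraction expansion $\varphi = [0;a_1,a_2,\ldots]$. I will use the classical \emph{standard words} $(s_n)_{n\geq 0}$, which satisfy $s_n = s_{n-1}^{a_n} s_{n-2}$ for $n\geq 2$ (with suitable short initial data on $\{0,1\}$) and whose lengths grow to infinity. Two standard facts will be invoked: each $s_n$ is a factor of every bi-infinite Sturmian word generated by $\varphi$, and, since $a_n\geq 1$, the word $s_n$ begins with $s_{n-1}$ (so in particular $s_{n+1}$ begins with $s_n$).

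First I would observe that $s_{n+1}s_n$ is itself a factor of the Sturmian word for every $n$. Indeed, $s_{n+2} = s_{n+1}^{a_{n+2}} s_n$: if $a_{n+2}=1$ this equals $s_{n+1}s_n$ directly, and if $a_{n+2}\geq 2$, then $s_{n+2}$ begins with $s_{n+1}s_{n+1}$, which begins with $s_{n+1}s_n$ because $s_{n+1}$ begins with $s_n$. Hence $s_{n+1}s_n$ is always a prefix of $s_{n+2}$, and therefore a subword of any bi-infinite Sturmian word generated by $\varphi$.

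Given $m$, I choose $n$ large enough that $|s_{n-1}|\geq m$ and split on the partial quotient $a_{n+1}$. If $a_{n+1}\geq 2$, then $s_{n+1}=s_n^{a_{n+1}}s_{n-1}$ begins with $s_n s_n$, so I take $w=s_n$ and $k=|s_n|\geq m$. If $a_{n+1}=1$, then $s_{n+1}s_n = s_n s_{n-1} s_n$; using that $s_n$ begins with $s_{n-1}$, I write $s_n = s_{n-1} v$ for some word $v$, which gives $s_{n-1} s_n = s_{n-1} s_{n-1} v$, so the square $s_{n-1}s_{n-1}$ is a factor of $s_{n+1}s_n$ starting at position $|s_n|$. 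I then take $w=s_{n-1}$ and $k=|s_{n-1}|\geq m$. In either case $ww$ is a subword of a Sturmian word generated by $\varphi$ with $k\geq m$, as required.

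The main subtlety that I expect to be the obstacle is the ``all ones'' continued fraction, for instance $\varphi$ related to the golden ratio, where no partial quotient $a_{n+1}\geq 2$ is ever available and one cannot simply double a standard word. The key observation there is to pass to the concatenation $s_{n+1}s_n$ and exploit the factorization $s_n = s_{n-1} v$ to extract a square at the shorter scale $|s_{n-1}|$, which still tends to infinity.
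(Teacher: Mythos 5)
Your argument is correct, but it takes a genuinely different route from the paper. The paper's proof is dynamical and works directly with the rotation defining the coding: it picks $\varepsilon>0$ so small that the orbit points $i\varphi$ for $i=-1,0,\dots,n-1$ avoid $[1-\varepsilon,1)$, lets $k$ be the first time $k\varphi$ lands in $[1-\varepsilon,1)$ (forcing $k\geq n$), and checks symbol by symbol that the Sturmian word with initial point $\varepsilon$ satisfies $X(i)=X(i+k)$ for $0\leq i<k$, because the rotation by $k\varphi$ moves each coding point by less than its distance to the partition boundary. You instead exhibit squares combinatorially via the standard words $s_n=s_{n-1}^{a_n}s_{n-2}$, showing $s_{n+1}s_n$ is always a factor and then extracting $s_ns_n$ or $s_{n-1}s_{n-1}$ according to whether $a_{n+1}\geq 2$ or $a_{n+1}=1$; your case split correctly handles the bounded-partial-quotient (e.g.\ golden mean) situation. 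The trade-offs: the paper's argument is self-contained given its rotation-based definition of Sturmian sequences and needs no continued-fraction machinery, whereas yours imports two standard facts (that every $s_n$ is a factor of every Sturmian word of the given slope, and the prefix property), which are classical but would need a citation such as Allouche--Shallit or Lothaire. One cosmetic caveat: with the paper's coding ($0$ on $[0,\varphi)$, $1$ on $[\varphi,1)$) the slope of the resulting Sturmian language is $1-\varphi$ rather than $\varphi$, so strictly you should take the continued fraction expansion of the appropriate slope; this does not affect the validity of your argument.
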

\begin{proof}
Fix $n\in\N$. Let $\varepsilon>0$ be such that $i\varphi \notin [1-\varepsilon,1)$ for $i=-1,0,1,2,\dots, n-1$. Let $k$ be the smallest natural number such that $k\varphi\in [1-\varepsilon,1)$ and $X$ be a Sturmian word generated by $\varphi$ with initial point $\varepsilon$. We define $\widetilde{w}(i)=\widetilde{w}(i+k)=X(i)$ (and $w(i)=X(i)$). We need to show that $X(i+k)=X(i)$ for $0\leq i\leq k-1$. 

By definition, if $X(i)=0$ then $i\varphi +\varepsilon \in [0,\varphi)$ and by assumption $i\varphi\notin [1-\varepsilon,1)$ so $i\varphi +\varepsilon\in [\varepsilon,\varphi)$. Hence $i\varphi+ k\varphi +\varepsilon \in [0, \varphi)$ which means that $X(i+k)=0$. If $X(i)=1$ then $i\varphi +\varepsilon \in [\varphi,1)$, and since $(i-1)\varphi\notin [1-\varepsilon,1]$ we get that $i\varphi +\varepsilon \in [\varphi+\varepsilon,1)$. This gives $i\varphi+k\varphi+\varepsilon\in [\varphi, 1)$ which means that $X(i+k)=1$. 
\end{proof}

\begin{theorem}\label{sturmian_unstable}
Assume that $\varphi \in (0,1)$ is badly approximable and the interaction energy decays as $1/r^{\alpha}$ with $\alpha>3$. 
Then Sturmian ground-state configurations generated by $\varphi$ are unstable.  
\end{theorem}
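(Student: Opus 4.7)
The plan is to extend the perturbation scheme of Theorems 3.1 and 4.1 to the Sturmian case, with Lemma 5.5 (the doubling lemma) playing the role of the self-similar substructure used in the Toeplitz and Thue--Morse arguments. Concretely, for an arbitrary $\mu>0$ I would show instability against a chemical potential $-\mu$ favoring particles, by constructing a finite perturbation $Y$ of a Sturmian sequence $X$ with $H^\mu(Y|X)<0$.

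Given a parameter $n\in\N$, let $k\ge n$ and $w\in\{0,1\}^k$ be the doubling data provided by Lemma \ref{lemma-doubling}, so that both $w$ and $\widetilde w=ww$ are Sturmian subwords. By unique ergodicity the word $w$ occurs at some position $i\in\Z$ of $X$; set $V=[i,i+2k-1]$, $Y|_V=\widetilde w$, and $Y=X$ on $V^c$. Since $\widetilde w$ lies in the Sturmian language, the characterization of Theorem \ref{forbidden} together with Proposition \ref{forbidden-pro} implies that $Y|_V$ contains no forbidden pair of $1$'s at distances in $F$ and no $m$ consecutive $0$'s; hence the entire energy cost $H(Y|X)$ comes from forbidden pairs straddling the boundary of $V$.

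I would then estimate gain and cost separately. The second copy of $w$ inside $\widetilde w$ corresponds to the Sturmian $k$-block starting at phase $x+k\varphi$, which differs from $X|_{[i+k,i+2k-1]}$ at only $O(k\|k\varphi\|)$ sites; the badly approximable hypothesis together with Lemma \ref{lemma-doubling}'s construction give $\|k\varphi\|=\Theta(1/k)$, so the Hamming distance is $O(1)$. Choosing the chemical-potential sign appropriately, $Y$ has at least one extra particle on $V$ relative to $X$, yielding a gain of at least $\mu$. For the cost, any new forbidden pair must be of the form $(j,l)$ with $j\in V$ flipped, $l\notin V$ with $X(l)=1$, and $|j-l|\in F$; the contribution is $1/|j-l|^\alpha$. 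Using rotation dynamics to show that the flipped sites lie at distance $\Omega(k)$ from the boundary of $V$, and the three-distance-type distribution of forbidden distances among the external $1$'s, one obtains a cost bound of the form $C k^{3-\alpha}$. For $\alpha>3$ this tends to $0$ as $k\to\infty$, so choosing $n$ (and therefore $k$) large enough forces the cost below $\mu$, giving $H^\mu(Y|X)<0$.

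The main obstacle is precisely this cost estimate. A naive bound on the per-site contribution is $O(1)$ (summing $1/d^\alpha$ over $d\in F$), independent of $k$, which does not suffice for arbitrarily small $\mu$. To sharpen it to $O(k^{3-\alpha})$ one has to (i) invoke the badly approximable hypothesis to rule out the accumulation of flipped sites near the boundary of $V$, and (ii) exploit the explicit description $F=\{k:k\varphi\in[1-\varphi,\varphi]\}$ from Proposition \ref{forbidden-pro} to show that the count of external $1$'s of $X$ lying at a forbidden distance from a given flipped site decays in $k$. Tracking both effects simultaneously is where the threshold $\alpha>3$ of the theorem emerges, in contrast with the weaker threshold $\alpha>2$ of Theorem 2.2 for systems without the strict boundary condition.
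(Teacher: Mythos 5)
Your proposal diverges from the paper's proof in a fundamental way: you construct a \emph{finite} excitation $Y$ of $X$ (replacing one window of $X$ by $\widetilde w$), whereas the paper compares the Sturmian configuration with the \emph{infinite periodic} configuration $Y_m$ of period $w_m$ at the level of energy \emph{densities}, dividing by $2l+1$ and letting $l\to\infty$. The difference is not cosmetic, and your route has a gap precisely where the theorem is hardest. Because Sturmian words are balanced (they satisfy the strict boundary condition), the number of $1$'s in your window $V$ changes by at most $1$ when you replace $X|_{[i+k,i+2k-1]}$ by the second copy of $w$ --- and it may well change by $0$, in which case the chemical potential gains nothing and $H^\mu(Y|X)\ge 0$ no matter how small the interaction cost is. You never verify that the particle count actually changes, and nothing in Lemma \ref{lemma-doubling} guarantees it. This is exactly the obstruction that forces the paper to abandon the finite-excitation scheme of Theorems 3.1 and 4.1: instead it uses the badly approximable hypothesis to get $|n(w_m)/k_m-(1-\varphi)|>c/k_m^2$, a \emph{per-site} frequency gap between the periodic configuration and the Sturmian one, which accumulates to a chemical-potential gain of order $(2l+1)|\mu|c/k_m^2$ over a window of length $2l+1$, against a per-site interaction cost $O(k_m^{1-\alpha})$ (there are no forbidden pairs at distance at most $k_m$ in $Y_m$, by the doubling lemma). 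The threshold $\alpha>3$ is exactly the comparison $k_m^{1-\alpha}\ll k_m^{-2}$.

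A second, related problem is that your cost estimate is both unproven and suspicious. The claims that the flipped sites lie at distance $\Omega(k)$ from the boundary of $V$ and that the boundary cost is $Ck^{3-\alpha}$ are asserted, not derived: the discrepancy sites between the second copy of $w$ and $X|_{[i+k,i+2k-1]}$ are the times at which the rotation orbit enters an interval of length $\|k\varphi\|$ around the endpoints of $[0,\varphi)$, and there is no reason the first such entry occurs $\Omega(k)$ steps from the edge of the window. More tellingly, if your scheme worked as described, the cost would be $O(k^{1-\alpha})$ and you would ``prove'' instability for all $\alpha>1$ against a bounded gain $\mu$; the fact that the exponent $3$ has no identifiable source in your computation is a sign the argument is not closing. (You also leave unaddressed the forbidden pattern of $m$ consecutive $0$'s that the right edge of $V$ could create.) To repair the proof, follow the paper: take $Y_m$ periodic with period $w_m$, bound its interaction energy density by $\frac{2}{\alpha-1}k_m^{1-\alpha}$, use bad approximability to bound the frequency difference from below by $c/k_m^2$, choose the sign of $\mu$ accordingly, and conclude $\rho(Y_m)<\rho(X)$ for large $m$ when $\alpha>3$.
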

\begin{proof}
Let $X$ be a Sturmian sequence generated by $\varphi$. Let us now introduce a small chemical potential $\mu$ which favors the presence of particles (occurrence of 1's). For every $m\in \N$ pick $k_m\geq m$ and a word $w_m$ of length $k_m$ whose doubling is a Sturmian subword as in Lemma \ref{lemma-doubling}. Let $Y_m$ be the periodic word with period $w_m$. For each $i$ such that $Y_m(i)=1$ the energy from pairs of 1's containing $Y_m(i)$ is not greater than $$2\sum_{i=k_m+1}^\infty \frac{1}{i^\alpha}\leq 
2\int_{k_m}^\infty \frac{dx}{x^\alpha} = \frac{2}{\alpha-1} k_m^{-\alpha+1}$$
since there are no forbidden pairs of 1's in $Y_m$ at distance less than $k_m$.
Hence the difference of energies coming from pairs of 1's between $Y_m$ and $X$ involving particles in the interval $[-l,l]$ may be estimated as follows. 
\begin{gather}\label{H2-energy}
E_1(Y_m([-l,l]))-E_1(X([-l,l]))= E_1(Y_m([-l,l]))\leq (2l+1) \frac{2}{\alpha-1} k_m^{-\alpha+1}.
\end{gather}
Denote by $\xi(X)$ and $\xi(Y_m)$ the frequency of 1's in $X$ and $Y_m$ respectively. We have $\xi(X)=1-\varphi$ and $\xi(Y_m)= \frac{n(w_m)}{k_m}$ (where $n(w_m)$ is the number of 1's in $w_m$). Since $\varphi$ is badly approximable there exists $c>0$ independent of $m$ such that 
$$|\xi(Y_m)- \xi(X)| = \left |\frac{n(w_m)-k_m}{k_m} + \varphi \right | > \frac{c}{k_m^2}.$$
This shows that 
\begin{gather*}
    |E_2(Y_m([-l,l])) - E_2(X([-l,l]))|\geq (2l+1)\frac{|\mu|c}{2k_m^2},
\end{gather*}
 where $E_2$ is the energy of 1's in the interval $[-l,l]$ (from the chemical potential $\mu$).

Replacing $\mu$ with $-\mu$ if necessary, we get
\begin{gather}\label{H2lambda-energy}
 E_2(Y_m([-l,l])) - E_2(X([-l,l]))\leq -(2l+1)  \frac{|\mu|c}{2k_m^2}.
\end{gather}
Combining (\ref{H2-energy}) and (\ref{H2lambda-energy}) gives 
\begin{gather*}
    E(Y_m([-l,l]))-E(X([-l,l]))\leq (2l+1) \frac{2}{\alpha-1} k_m^{-\alpha+1}- (2l+1)  \frac{|\mu|c}{2}k_m^{-2}.
\end{gather*}
By dividing by $2l+1$ and taking the limit $l\rightarrow \infty$ we obtain
\begin{gather*}
    \rho(Y_m)-\rho(X) \leq \frac{2}{\alpha-1} k_m^{-\alpha+1}-   \frac{|\mu|c}{2}k_m^{-2},
\end{gather*}
where $\rho(Y_m), \rho(X)$ denote the energy densities of $Y_m$ and $X$ respectively.

Since $\alpha>3$, $\frac{2}{\alpha-1}k_m^{-\alpha+1}$ tends to $0$ faster than $\frac{|\mu|c}{2}k_m^{-2}$ when $m\rightarrow \infty$ so for large enough $m$ we have
$$\rho(Y_m)<\rho(X),$$
which completes the proof.
\end{proof}

\section{Discussion}

We studied stability of one-dimensional non-periodic ground-state configurations in classical lattice-gas models with interactions decaying as $1/r^{\alpha}$
with respect to finite-range perturbations of interactions. We showed that the Thue-Morse ground state is unstable for any $\alpha > 1$ 
and the Sturmian ground states are unstable for $\alpha > 3$.

It is a fundamental problem to construct a one-dimensional lattice-gas model with the unique non-periodic ground state
which is stable with respect to finite-range perturbations of interactions. We conjecture that Sturmian ground states generated by rotations on the circle 
by badly approximable irrationals are stable for some small values of $\alpha$. 
\vspace{3mm}

{\bf Acknowledgments} We would like to thank the National Science Centre (Poland) for financial support under Grant No. 2016/22/M/ST1/00536.

\end{document}